\newtheorem{prop}{Proposition}
\newtheorem{theorem}{Theorem}
\newtheorem{corollary}{Corollary}
\newtheorem{lemma}{Lemma}
\newtheorem{remark}{Remark}
\newcommand\myeq{\stackrel{\mathclap{\normalfont\mbox{def}}}{=}}
\newenvironment{proof}{\paragraph{Proof:}}{\hfill$\square$}
\providecommand{\keywords}[1]
{
  \small	
  \textbf{\textit{Keywords---}} #1
}
\title{Theoretical Foundation of Colored Petri Net through an Analysis of their Markings as Multi-classification}
\newcommand\email[2][]%
   {\newaffiltrue\let\AB@blk@and\AB@pand
      \if\relax#1\relax\def\AB@note{\AB@thenote}\else\def\AB@note{\relax}%
        \setcounter{Maxaffil}{0}\fi
      \begingroup
        \let\protect\@unexpandable@protect
        \def\thanks{\protect\thanks}\def\footnote{\protect\footnote}%
        \@temptokena=\expandafter{\AB@authors}%
        {\def\\{\protect\\\protect\Affilfont}\xdef\AB@temp{#2}}%
         \xdef\AB@authors{\the\@temptokena\AB@las\AB@au@str
         \protect\\[\affilsep]\protect\Affilfont\AB@temp}%
         \gdef\AB@las{}\gdef\AB@au@str{}%
        {\def\\{, \ignorespaces}\xdef\AB@temp{#2}}%
        \@temptokena=\expandafter{\AB@affillist}%
        \xdef\AB@affillist{\the\@temptokena \AB@affilsep
          \AB@affilnote{}\protect\Affilfont\AB@temp}%
      \endgroup
       \let\AB@affilsep\AB@affilsepx
}
\author[]{Jules Chenou }
\author[]{George Hsieh}
\author[]{Aurelia Williams}
\affil[]{Cybersecurity Complex, Center of Excellence of Cybersecurity, Norfolk State University}
\email{\url{{jchenou,ghsieh, atwilliams}@nsu.edu}}
\date{}
\begin{document}

\maketitle
\begin{abstract}
Barwise and Seligman stated the first principle of information flow: "Information flow results from regularities in the distributed system."  They represent a distributed system in terms of a \textit{classification} consisting of a set of objects or tokens to be classified, a set of types used to classify tokens, and a binary relation between tokens and types that tells one which tokens are classified as being of which types. We aim to further this investigation and proceed with a dynamic or evolving system instead of a static system.\\
We claim that a classification is a snapshot of a distributed system at a given moment or context. We then aim to answer the question posed by an evolving context. As the context or configuration changes, how do regularities evolve.\\
This paper is a continuation of an investigation we started in \cite{esterlin},  where we initiated how to capture a dynamism of information flow with a Kripke structure.   Here we develop the same procedure with colored  Petri net(CPN). We first extend the classification concept to multiclassification by replacing its binary relation between tokens and types with a multi relation: a function from \(tok\mathcal(A)\) \(\times\) \(typ\mathcal(A)\) to \(\mathbb{N}\), the set of natural numbers. The multiclassification will unfold into binary classification in order to compute its theory. It turns out that markings of a CPN are multiclassification;  Amalgamating the theories of those classifications obtained as markings of CPN results in a CPN's knowledge base. 
\end{abstract}
\keywords{Colored Petri Net; Multi-classification; Theory}
\section{Introduction}
Barwise and Seligman, in \cite{barwise}, apply category-theoretic notions to the problem of information flow.  This work made use of notions of situation that Barwise developed in earlier work (addressing the situation in logic and situation semantics).\par
Information flow provides
\begin{itemize}
\item
a framework of how to represent information (classifications),
\item
reasoning about information (theories), and
\item
gluing information into a consistent and coherent model (local logic).
\end{itemize}
Barwise’s approach is missing something inherent to physical systems: the dynamism, in other words, how do classifications representing an agent‘s view of things evolve as the situation changes. The local logic will not remain the same, but how to capture the evolution of theories from situations. The same question arises if one looks for the variation on the classification side.\par

There are two alternative solutions to these questions: The external way consists of providing a classification with some dynamism; by doing so, we obtain a structure similar to a Kripke structure.  Thus, a Kripke structure is a Barwise classification augmented with an accessibility relation on the set of tokens. [temporal-spatial relation] \par
The other way will consist of imagining a mechanism for the evolution of type as a set of tokens. It turns out that one such mechanism is the enabling condition of colored Petri net, with classifications, as markings. \par
This paper is organized as follows: In section 2, we recall the basics of classification and infomorphism. We give a characterization of infomorphism as a commutative diagram and expand the notion of classification to multi-classification, where type is now viewed as a multiset of tokens instead of a set of tokens. In section 3, we will describe how to obtain the state graph of a colored Petri net as evolving multi-classifications. Combining all the theories obtained from multi-classification gives us the knowledge base of a colored Petri net. Section 4 is the conclusion and direction for future research.
\section{Basic of Classification and Channel Theory}
Barwise and Seligman \cite{barwise} presented a framework for the “flow of information” in (generally implicitly) category-theoretic terms. They address the question, “How it is that information about some component of a system carries information about other components of the system?” \par
They define a classification \( \mathcal{A}\) to be a structure with non-empty sets typ(\( \mathcal{A}\)) of types and tok(\( \mathcal{A}\)) of tokens as well as a binary relation \(\vDash_\mathcal{A}\) between tok(\( \mathcal{A}\)) and typ(\( \mathcal{A}\)) such that, for \(a \in tok(A) \)  and \(\alpha \in typ(A) \), \( a \vDash_\mathcal{A} \alpha \)    indicates that a is of type \( \alpha \). The theory does not limit what a or \( \alpha \)  might be (as long as it makes sense for a to be of type \( \alpha \).  It could be that a is an object and \(\alpha \) a property (monadic first-order relation), or a might be a situation and \( \alpha \) a type of situation; often, different tokens of a classification amount to the same physical system across different time points and types are instantaneous partial state descriptions of the system. \par
Giving two classifications \( \mathcal{A} \) and \( \mathcal{C} \), an infomorphism f from \( \mathcal{A} \) to \( \mathcal{C} \) is a pair of functions
\begin{equation} \label{eq1}
\begin{split}
(f^\wedge,f^\vee), f^\wedge: typ( \mathcal{A} )  \longrightarrow  typ( \mathcal{C} ) \; and \\
f^\vee:tok( \mathcal{C} )  \longrightarrow  tok( \mathcal{A} ) 
\end{split}
\end{equation}
satisfying, for all tokens \( c \in tok(\mathcal{C}) \) and all types \( \alpha \in typ(\mathcal{A}) \)
\begin{equation} \label{eq2}
f^\vee(c) \vDash_\mathcal{A} \alpha \;\; iff \;\; c  \vDash_\mathcal{C} f^\wedge(\alpha)
\end{equation}
Composition of binary relation: Let  \(R \subseteq A \times B\) and \(S \subseteq B \times C\)  such that \(codomain(R) = domain(S)\) then \(S \circ R = \{(a,c) \in A \times C  \vert \exists b \in B (a,b) \in R \wedge (b,c) \in S \}\).  \par
With the above definition of composition of binary relations, If \( (f^\wedge,f^\vee):\) (typ(\(\mathcal{A}\)), tok(\(\mathcal{A}\)), \(\vDash_\mathcal{A}\)) \(\rightarrow\) (typ(\(\mathcal{C}\)), tok(\(\mathcal{C}\)), \(\vDash_\mathcal{C}\)) is an informorphism, the composition \(f^\vee \circ \vDash_\mathcal{A} \circ f^\wedge \) is well defined. Indeed, \(f^\vee = \{(c,f^\vee(c) \forall c \in tok(\mathcal{A}) \subseteq tok(\mathcal{A}) \times tok(\mathcal{A})\). Furthermore, we have, \par

\begin{prop}
 If \( (f^\wedge,f^\vee): (typ(\mathcal{A}), tok(\mathcal{A}), \vDash_\mathcal{A}) \rightarrow (typ(\mathcal{C}), tok(\mathcal{C}), \vDash_\mathcal{C})\) is an informorphism  then \( f^\vee \circ \vDash_\mathcal{A} \circ f^\wedge \subseteq \hspace{0.2cm} \vDash_\mathcal{C}\)
 \end{prop}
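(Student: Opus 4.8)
The plan is to argue entirely at the level of ordered pairs, once the triple composition has been read in the only way that makes it a relation from $tok(\mathcal{C})$ to $typ(\mathcal{C})$. First I would note that, being functions, $f^\wedge$ and $f^\vee$ are in particular functional binary relations $f^\wedge \subseteq typ(\mathcal{A}) \times typ(\mathcal{C})$ and $f^\vee \subseteq tok(\mathcal{C}) \times tok(\mathcal{A})$, so that $f^\vee \circ {\vDash_\mathcal{A}} \circ f^\wedge$ is well defined (this is exactly the observation recorded just before the statement, which I would cite) and that a pair $(c,\gamma)$ belongs to it precisely when there exist $a \in tok(\mathcal{A})$ and $\alpha \in typ(\mathcal{A})$ with $a = f^\vee(c)$, $a \vDash_\mathcal{A} \alpha$, and $\gamma = f^\wedge(\alpha)$.

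The core of the argument is then a one-line application of the infomorphism condition. I would fix an arbitrary $(c,\gamma)$ in the composite, extract witnesses $a$ and $\alpha$ as above, rewrite $a \vDash_\mathcal{A} \alpha$ as $f^\vee(c) \vDash_\mathcal{A} \alpha$ using $a = f^\vee(c)$, and invoke equation~(\ref{eq2}) in its left-to-right direction to conclude $c \vDash_\mathcal{C} f^\wedge(\alpha)$. Since $\gamma = f^\wedge(\alpha)$, this is exactly $c \vDash_\mathcal{C} \gamma$, i.e. $(c,\gamma) \in {\vDash_\mathcal{C}}$; as $(c,\gamma)$ was arbitrary, the desired inclusion follows.

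I do not expect any genuine obstacle here: only the forward implication of (\ref{eq2}) is used, and the reverse implication would matter only if one wished to sharpen ``$\subseteq$'' to an equality between the composite and the part of $\vDash_\mathcal{C}$ living over the image of $f^\wedge$ --- which fails in general, so the inclusion is the right statement. The only thing to be careful about is typing bookkeeping: the intermediate type $\alpha$ must lie in $typ(\mathcal{A})$ for $f^\wedge(\alpha)$ to make sense, and the intermediate token $f^\vee(c)$ must lie in $tok(\mathcal{A})$ for $f^\vee(c) \vDash_\mathcal{A} \alpha$ to be a legitimate instance of $\vDash_\mathcal{A}$; both are automatic from the declared domains and codomains of $f^\wedge$ and $f^\vee$. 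I would close by noting that strictness of the inclusion is what later forces us to pass from raw classification relations to their theories when amalgamating information.
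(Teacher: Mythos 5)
Your proposal is correct and follows essentially the same route as the paper: unwind the relational composite $f^\vee \circ {\vDash_\mathcal{A}} \circ f^\wedge$ into witnesses $a = f^\vee(c)$ and $\gamma = f^\wedge(\alpha)$, then apply the forward direction of the infomorphism condition~(\ref{eq2}) once. Your write-up is in fact the cleaner rendering of that argument --- the paper's own chain of equivalences swaps the subscripts $\mathcal{A}$ and $\mathcal{C}$ in several lines and ends at ${\vDash_\mathcal{A}}$ rather than ${\vDash_\mathcal{C}}$, whereas yours keeps the typing straight throughout.
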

\begin{proof}
 Let \((b,\beta) \in f^\vee \circ \vDash_\mathcal{C} \circ f^\wedge = f^\vee \circ (\vDash_\mathcal{C} \circ f^\wedge)\) \\
 \((b,\beta) \in f^\vee \circ (\vDash_\mathcal{C} \circ f^\wedge) \Leftrightarrow \exists a \in A \vert (b,a) \in f^\vee \wedge (a, \beta) \in (\vDash_\mathcal{C} \circ f^\wedge)\)\\
 \(\Leftrightarrow \exists a \in A \vert (b,a) \in f^\vee \wedge \exists \alpha \in typ(\mathcal{A}) \vert (a, \alpha) \in \vDash_\mathcal{C} \wedge (\alpha, \beta) \in f^\wedge \)\\
\(\Leftrightarrow \exists a \in A \vert a = f^\vee(b)  \wedge \exists \alpha \in typ(\mathcal{A}) \vert (a, \alpha) \in \vDash_\mathcal{C} \wedge \beta = f^\wedge(\alpha)\)\\
\(\Leftrightarrow \exists a \in A \vert a = f^\vee(b) \wedge \exists \alpha \in typ(\mathcal{A}) f^\vee(b) \vDash_\mathcal{C} \alpha \wedge \beta = f^\wedge(\alpha)\)\\
\(\Leftrightarrow \exists a \in A \vert a = f^\vee(b) \wedge \exists \alpha \in typ(\mathcal{A}) b \vDash_\mathcal{A} f^\wedge(\alpha) \wedge \beta = f^\wedge(\alpha)\)\\
\(\Leftrightarrow \exists a \in A \vert a = f^\vee(b) \wedge \exists \alpha \in typ(\mathcal{A}) b \vDash_\mathcal{A} \beta \wedge \beta = f^\wedge(\alpha)\)\\
\(\Rightarrow (b , \beta) \in \vDash_\mathcal{A}\).
\end{proof}

 \par
 In graphic terms it means the following diagram in Figure 1"commute" This was first drawed in \cite{gebreyohannes}.
\begin{figure}[ht!]
  \includegraphics{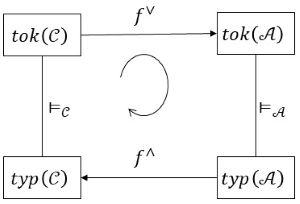}
  \caption{A commutative diagram of an infomorphism }
\end{figure}

Intuitively, an infomorphism is a part-to-whole, A-to-C, informational relationship. Even though the information is not defined, it is assumed to “flow” among the components of a system. Components may, but need not, be distant from one another in time and space, and they may be very different things.  The system is “distributed” in this sense (not necessarily in the sense in which that term is used in computer science).  For example, the students, classrooms, scheduling system, and attendance records together form a distributed system. \par
Turning to regularities in a classification’s types, let \(\mathcal{A}\) be a classification and \(\Gamma\) and \(\Delta\) be sets of types in A.  A token a of A satisfies the “sequent” \(\langle\Gamma,\Delta\rangle\), provided that, if a is of every type in \(\Gamma\), then it is of some type in \(\Delta\); which we denote \textit{a} \(\Vdash\) \(\langle\Gamma,\Delta\rangle\).\\
We can observe that 
\( \textit{a} \Vdash \langle \Gamma,\Delta \rangle \myeq  \Gamma \subseteq \hat{a} \Rightarrow \Delta \cap  \hat{a} \neq \emptyset\).
where \(\hat{a}\) = \(\{\alpha\) \(\in\) typ(\(\mathcal{A}\)) \(\vert\) \textit{a}\(\vDash_\mathcal{A}\) \(\alpha\}\)\\
If every token of \(\mathcal{A}\) satisfies \(\langle\Gamma,\Delta\rangle\), then \(\Gamma\) is said to entail \(\Delta\) and \(\langle\Gamma,\Delta\rangle\) is called a constraint supported by \(\mathcal{A}\).  The set of all constraints supported by \(\mathcal{A}\) is called the complete theory of \(\mathcal{A}\), denoted by \(Th(\mathcal{A})\). \par
These constraints are system regularities, and it is by virtue of regularities among connections that information about some components of a distributed system carries information about other components.  These regularities are relative to the analysis of the distributed system in terms of information channels. Barwise and Seligman’s summary statement of their analysis of information flow, restricted to the simple case of a system with two components, a and b, is as follows.

\begin{remark}
Given two sequents \(\langle \Gamma_1, \Delta_1 \rangle \) and \(\langle \Gamma_2, \Delta_2 \rangle \), the relation defined by \(\langle \Gamma_1, \Delta_1 \rangle \) \(\sqsubseteq\) \(\langle \Gamma_2, \Delta_2 \rangle \) if and only if \(\Gamma_1 \subseteq \Gamma_2 \) and \(\Delta_1 \subseteq \Delta_2\) is an order (its is reflexive, antisymetric and transitive)
\end{remark}

\begin{prop}
Let \(\textit{a} \in \textit{tok}(\mathcal{A}) a token of a classification \mathcal{A},  \langle \Gamma_1, \Delta_1 \rangle  and \langle \Gamma_2, \Delta_2 \rangle  two sequents of \mathcal{A}, if \langle \Gamma_1, \Delta_1 \rangle  \sqsubseteq \langle \Gamma_2, \Delta_2 \rangle  and \textit{a} \Vdash \langle\Gamma_1,\Delta_1\rangle then \textit{a} \Vdash \langle\Gamma_2,\Delta_2\rangle\)
\end{prop}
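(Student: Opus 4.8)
The plan is to unfold both sides of the implication using the characterization $a \Vdash \langle \Gamma, \Delta \rangle \myeq (\Gamma \subseteq \hat{a} \Rightarrow \Delta \cap \hat{a} \neq \emptyset)$ recorded above, and then to exploit the two set inclusions $\Gamma_1 \subseteq \Gamma_2$ and $\Delta_1 \subseteq \Delta_2$ that, by the preceding Remark, are exactly what $\langle \Gamma_1, \Delta_1 \rangle \sqsubseteq \langle \Gamma_2, \Delta_2 \rangle$ abbreviates. Concretely, I would fix the token $a$, assume $a \Vdash \langle \Gamma_1, \Delta_1 \rangle$, and prove $a \Vdash \langle \Gamma_2, \Delta_2 \rangle$ by assuming the antecedent $\Gamma_2 \subseteq \hat{a}$ of the latter sequent (the case $\Gamma_2 \not\subseteq \hat{a}$ being vacuously settled) and deriving its consequent $\Delta_2 \cap \hat{a} \neq \emptyset$.

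First I would observe that, since $\Gamma_1 \subseteq \Gamma_2$ and $\Gamma_2 \subseteq \hat{a}$, transitivity of set inclusion yields $\Gamma_1 \subseteq \hat{a}$. Next, applying the hypothesis $a \Vdash \langle \Gamma_1, \Delta_1 \rangle$ — whose antecedent $\Gamma_1 \subseteq \hat{a}$ is now met — I obtain $\Delta_1 \cap \hat{a} \neq \emptyset$, i.e.\ there is a type $\alpha$ with $\alpha \in \Delta_1$ and $a \vDash_\mathcal{A} \alpha$. Finally, since $\Delta_1 \subseteq \Delta_2$, that same $\alpha$ belongs to $\Delta_2 \cap \hat{a}$, so $\Delta_2 \cap \hat{a} \neq \emptyset$, which is precisely $a \Vdash \langle \Gamma_2, \Delta_2 \rangle$.

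I do not expect any genuine obstacle: the statement is a pure monotonicity fact, expressing that enlarging $\Gamma$ only weakens the premise ``$a$ is of every type in $\Gamma$'' while enlarging $\Delta$ only enlarges the pool of potential witnesses for ``$a$ is of some type in $\Delta$.'' The only point demanding a little care is the vacuous branch $\Gamma_2 \not\subseteq \hat{a}$, where $\langle \Gamma_2, \Delta_2 \rangle$ is satisfied trivially and nothing further is needed; the proof should either dispatch this case at the outset or, as above, open directly with the assumption $\Gamma_2 \subseteq \hat{a}$. It is worth remarking that this Proposition is exactly what guarantees that $Th(\mathcal{A})$, viewed inside the order $\sqsubseteq$ on sequents, is upward closed, a property that will be convenient when amalgamating the theories arising as markings of a colored Petri net.
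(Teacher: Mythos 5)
Your proof is correct and follows essentially the same route as the paper's: assume the antecedent $\Gamma_2 \subseteq \hat{a}$, deduce $\Gamma_1 \subseteq \hat{a}$ from $\Gamma_1 \subseteq \Gamma_2$, invoke the hypothesis to get a witness in $\Delta_1 \cap \hat{a}$, and note that the same witness lies in $\Delta_2 \cap \hat{a}$ since $\Delta_1 \subseteq \Delta_2$. You merely spell out the transitivity step $\Gamma_1 \subseteq \Gamma_2 \subseteq \hat{a}$ that the paper leaves implicit.
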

\begin{proof}
\textit{a} \(\Vdash \langle\Gamma_1,\Delta_1\rangle \Leftrightarrow
\Gamma_1 \subseteq \hat{a} \Rightarrow \hat{a} \cap \Delta_1 \neq \emptyset\) \\
If \(\Gamma_2 \subseteq \hat{a}\) it's straightforward that \(\hat{a} \cap \Delta_1 \neq \emptyset\): the same elements that validated \(\hat{a} \cap \Delta_1 \neq \emptyset\) will fufill for \( \hat{a} \cap \Delta_2 \neq \emptyset\) because \(\Delta_1 \subseteq \Delta_2 \)
\end{proof}
\begin{corollary}
If \(\langle \Gamma_1, \Delta_1 \rangle  \sqsubseteq \langle \Gamma_2, \Delta_2 \rangle\)  and \( (\langle \Gamma_1, \Delta_1 \rangle \)  is a constraint of \(\mathcal{A}\) then  \(\langle \Gamma_2, \Delta_2 \rangle\)  is a constraint of \(\mathcal{A}\) as well.
\end{corollary}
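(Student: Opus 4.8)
The plan is to read the statement as a direct application of the preceding Proposition together with the definition of a constraint. Recall that $\langle\Gamma,\Delta\rangle$ being a constraint supported by $\mathcal{A}$ means precisely that \emph{every} token $a \in \mathrm{tok}(\mathcal{A})$ satisfies $a \Vdash \langle\Gamma,\Delta\rangle$. So the corollary will follow once we upgrade the token-wise implication supplied by the Proposition to a statement quantified over all tokens.

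First I would fix an arbitrary token $a \in \mathrm{tok}(\mathcal{A})$. Since $\langle\Gamma_1,\Delta_1\rangle$ is by hypothesis a constraint of $\mathcal{A}$, unfolding the definition gives $a \Vdash \langle\Gamma_1,\Delta_1\rangle$. Next, invoking the previous Proposition with this $a$ and the hypothesis $\langle\Gamma_1,\Delta_1\rangle \sqsubseteq \langle\Gamma_2,\Delta_2\rangle$, we conclude $a \Vdash \langle\Gamma_2,\Delta_2\rangle$. Since $a$ was an arbitrary token, every token of $\mathcal{A}$ satisfies $\langle\Gamma_2,\Delta_2\rangle$, which is exactly the definition of $\langle\Gamma_2,\Delta_2\rangle$ being a constraint of $\mathcal{A}$. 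Equivalently, in terms of $Th(\mathcal{A})$: the Proposition shows that $Th(\mathcal{A})$ is upward closed under the order $\sqsubseteq$ of the Remark, and the corollary is just the instance of that closure applied to $\langle\Gamma_1,\Delta_1\rangle \in Th(\mathcal{A})$.

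There is essentially no obstacle here; the only point requiring discipline is to resist re-proving the Proposition inside the corollary. The set-chasing with $\hat{a}$, the inclusion $\Gamma_i \subseteq \hat{a}$, and the nonemptiness $\hat{a} \cap \Delta_i \neq \emptyset$ have already been discharged there, so I would cite the Proposition as a black box rather than expand it again. (If a fully self-contained argument were desired, one could inline the one line: $\Gamma_2 \subseteq \hat{a}$ forces $\Gamma_1 \subseteq \hat{a}$ since $\Gamma_1 \subseteq \Gamma_2$, hence $\hat{a} \cap \Delta_1 \neq \emptyset$ because $a \Vdash \langle\Gamma_1,\Delta_1\rangle$, and then $\hat{a} \cap \Delta_2 \neq \emptyset$ since $\Delta_1 \subseteq \Delta_2$ — but this merely duplicates the Proposition.)
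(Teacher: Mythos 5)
Your proof is correct and matches the paper's intent exactly: the corollary is stated as an immediate consequence of the preceding Proposition, obtained by quantifying the token-wise implication over all tokens of \(\mathcal{A}\), which is precisely what you do. Citing the Proposition as a black box (with the optional inlined one-liner) is the right level of detail here.
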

\subsection{Example of Theory of a classification}
Consider the classification  \(\mathcal{A}\) in Table 1 with tok(\(\mathcal{A}\))=\{a,b,c\}, typ(\(\mathcal{A}\))=\{\(\alpha\),\(\beta\),\(\delta\)\} and 
\(\vDash_\mathcal{A}\) = \{((a,\(\alpha\)),(a,\(\delta\)),(b,\(\delta\)),(b,\(\beta\)),(c,\(\beta\))\}. \par
\begin{table}[ht!]
    \centering
    \begin{tabular}{|c|c|c|c|c|c}
    \hline
      \(\mathcal{A}\)   & \(\alpha\) & \(\beta\) &  \(\delta\) \\
      \hline
     a    &  1 & 0 & 1 \\
     \hline
     b & 0 & 1 & 1 \\
     \hline
     c & 0 & 1 & 0 \\
     \hline
     
    \end{tabular}
    \caption{Example of classification}
    \label{tab:my_label}
\end{table}
The theory of \(\mathcal{A}\) (\cite{barwise}, page 124) is \(Th(\mathcal{A})=\{ \langle \alpha,\delta \rangle,\langle \emptyset,\{ \alpha,  \beta \} \rangle,\langle \{ \alpha,  \beta\},\emptyset\rangle\}\). \par
In view of Table 1, the only token of type \(\alpha\) is a, we can observe that a is also of type \(\delta\); here we remove the curly bracket around the singleton to make it readable. This explanation makes  \(\langle \alpha,\delta \rangle\) a sequent constraint supported by \(\mathcal{A}\). 

Given a classification \(\mathcal{A}\) and a set \(\Gamma\)  of types of \(\mathcal{A}\); let \(\bigcap \Gamma=\{x \in tok(\mathcal{A})| \forall \alpha \in \Gamma,x\vDash_\mathcal{A} \alpha\}\) is a subset of \(tok(\mathcal{A})\); with \(\bigcap \emptyset = tok(\mathcal{A})\) and alternatively, \(\bigcup \Gamma =\{x \in tok(\mathcal{A})|\exists \alpha \in \Gamma,x\vDash_\mathcal{A} \alpha\}\). The following proposition gives an algebraic perspective on the validity of a constraint in a classification. 

\begin{theorem} Given a classification \(\mathcal{A}\), and a sequent \(\langle\Gamma,\Delta\rangle\) in \(typ(\mathcal{A})\); \(\langle\Gamma,\Delta\rangle\) is a constraint of \(\mathcal{A}\) if and only if \(\bigcap \Gamma \subseteq \bigcup\Delta \).
\end{theorem}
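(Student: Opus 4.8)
The plan is to reduce the claim to a pointwise reformulation of sequent satisfaction in terms of the two token–sets \(\bigcap\Gamma\) and \(\bigcup\Delta\), and then to push the universal quantification over tokens through the resulting implication.

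First I would record two elementary equivalences that merely unwind the definitions: for every token \(x \in tok(\mathcal{A})\), we have \(x \in \bigcap\Gamma\) if and only if \(\Gamma \subseteq \hat{x}\), and \(x \in \bigcup\Delta\) if and only if \(\hat{x} \cap \Delta \neq \emptyset\). Both follow immediately from the definitions of \(\bigcap\Gamma\), \(\bigcup\Delta\) and of \(\hat{x}=\{\alpha \in typ(\mathcal{A}) \mid x \vDash_\mathcal{A} \alpha\}\). The only point requiring attention is that the boundary conventions are consistent, namely that \(\bigcap\emptyset = tok(\mathcal{A})\) matches the fact that \(\emptyset \subseteq \hat{x}\) holds vacuously, and that \(\bigcup\emptyset = \emptyset\) matches the fact that \(\hat{x}\cap\emptyset \neq \emptyset\) is always false.

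Next I would substitute these two equivalences into the satisfaction clause. By the observation already recorded in the excerpt, \(a \Vdash \langle\Gamma,\Delta\rangle\) is by definition the implication \(\Gamma \subseteq \hat{a} \Rightarrow \hat{a}\cap\Delta \neq \emptyset\), which, after the substitution, is exactly the implication \(a \in \bigcap\Gamma \Rightarrow a \in \bigcup\Delta\). Then \(\langle\Gamma,\Delta\rangle\) is a constraint of \(\mathcal{A}\) iff every token satisfies it, i.e. iff for all \(a \in tok(\mathcal{A})\) we have \(a \in \bigcap\Gamma \Rightarrow a \in \bigcup\Delta\); and this universally quantified implication is, by the definition of set inclusion, precisely \(\bigcap\Gamma \subseteq \bigcup\Delta\). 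This settles both directions simultaneously.

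I do not expect a genuine obstacle: the statement is essentially a definitional repackaging, and the argument is a chain of biconditionals. The only place to be careful is the treatment of the degenerate sequents \(\langle\emptyset,\Delta\rangle\) and \(\langle\Gamma,\emptyset\rangle\), where one should explicitly invoke the conventions \(\bigcap\emptyset = tok(\mathcal{A})\) and \(\bigcup\emptyset = \emptyset\) (and, implicitly, that \(tok(\mathcal{A})\) is non-empty) so that the reformulation above remains valid in those edge cases.
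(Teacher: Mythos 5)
Your proposal is correct and follows essentially the same route as the paper's proof: both rest on the definitional equivalences \(x \in \bigcap\Gamma \Leftrightarrow \Gamma \subseteq \hat{x}\) and \(x \in \bigcup\Delta \Leftrightarrow \hat{x}\cap\Delta \neq \emptyset\), the paper merely splitting the argument into the two implications where you run a single chain of biconditionals. Your explicit attention to the empty-set conventions is a small presentational improvement but not a different method.
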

\begin{lemma}
Giving two set of types \(\Gamma_1  and \Gamma_2 \)\\
\(\Gamma_1  \subseteq  \Gamma_2  \Rightarrow\)
\[ \begin{cases} 
       \bigcup \Gamma_1 \subseteq \bigcup \Gamma_2  \\
       \bigcap \Gamma_2 \subseteq \bigcap \Gamma_1 
   \end{cases}
\]
\end{lemma}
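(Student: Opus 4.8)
The plan is to prove both inclusions by direct element-chasing, using only the definitions of $\bigcup\Gamma$ and $\bigcap\Gamma$ given just above the statement, namely $\bigcup\Gamma=\{x\in tok(\mathcal{A})\mid \exists\alpha\in\Gamma,\ x\vDash_\mathcal{A}\alpha\}$ and $\bigcap\Gamma=\{x\in tok(\mathcal{A})\mid \forall\alpha\in\Gamma,\ x\vDash_\mathcal{A}\alpha\}$, together with the convention $\bigcap\emptyset = tok(\mathcal{A})$. Assume throughout that $\Gamma_1\subseteq\Gamma_2$.

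For the first inclusion, I would take an arbitrary $x\in\bigcup\Gamma_1$. By definition there exists $\alpha\in\Gamma_1$ with $x\vDash_\mathcal{A}\alpha$. Since $\Gamma_1\subseteq\Gamma_2$, that same $\alpha$ lies in $\Gamma_2$, so $x$ witnesses membership in $\bigcup\Gamma_2$; hence $\bigcup\Gamma_1\subseteq\bigcup\Gamma_2$. (If $\Gamma_1=\emptyset$ then $\bigcup\Gamma_1=\emptyset$ and the inclusion is trivial, so no special case is really needed.)

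For the second inclusion, I would take an arbitrary $x\in\bigcap\Gamma_2$, so that $x\vDash_\mathcal{A}\alpha$ holds for every $\alpha\in\Gamma_2$. Because $\Gamma_1\subseteq\Gamma_2$, this in particular gives $x\vDash_\mathcal{A}\alpha$ for every $\alpha\in\Gamma_1$, i.e. $x\in\bigcap\Gamma_1$; hence $\bigcap\Gamma_2\subseteq\bigcap\Gamma_1$. The only point worth a remark is the degenerate case $\Gamma_1=\emptyset$: there $\bigcap\Gamma_1=tok(\mathcal{A})$ by convention, and $\bigcap\Gamma_2\subseteq tok(\mathcal{A})$ holds automatically, so the statement still goes through.

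I do not anticipate a genuine obstacle here: the result is exactly the monotonicity of the existential quantifier and the antitonicity of the universal quantifier over the index set, transported through the set-builder definitions. The only thing to be careful about is to state the empty-set conventions explicitly so that the quantifier arguments are not vacuously mishandled; everything else is a two-line verification for each of the two inclusions.
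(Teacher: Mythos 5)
Your proof is correct: the paper itself omits the argument, stating only that ``the proof of this lemma is straightforward,'' and your element-chasing verification of both inclusions (including the empty-set conventions) is exactly the intended routine argument. Nothing is missing or different in substance.
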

The proof of this lemma is straightforward.
\begin{proof} (Of theorem 1)
\(\Rightarrow is true because b \in \bigcap \Gamma \Rightarrow \Gamma \subseteq \hat{b}. If \langle\Gamma, \Delta\rangle is a constraint then by definition, \forall b \in tok(\mathcal{A}), \Gamma \subseteq \hat{b} \Rightarrow \Delta \cap \hat{b} \neq \emptyset. If  x \in \bigcap \Gamma  i.e. \Gamma \subseteq \hat{x} then \Delta \cap \hat{x} \neq \emptyset\).\\
\(\Delta \cap \hat{x} \neq \emptyset \Rightarrow \exists \alpha \in \Delta \vert x\vDash_\mathcal{A} \alpha\)\\
\(\Rightarrow x \in \bigcup \Delta\)\\
\(\Leftarrow\)) Inversely, suppose \(\bigcap \Gamma \subseteq \bigcup \Delta \).
Let \(a \in tok(\mathcal{A})\) such that \(\Gamma \subseteq \hat{a}\)\\
\(\Gamma \subseteq \hat{a} \Rightarrow\) \(\hat{a} \subseteq \cap \Gamma \subseteq \cup \Delta \Rightarrow \hat{a} \cap \Delta \neq \emptyset \)
\end{proof}
Returning to the classification \(\mathcal{A}\) in Table 1, \(\bigcap \emptyset = tok(\mathcal{A})\) and \(\bigcup \{\alpha, \beta \} = tok(\mathcal{A}) \) thus, \(\langle \emptyset, \{\alpha, \beta \}\) is a constraint of \(\mathcal{A}\). Similarly, \(\bigcap  \{\alpha, \beta \} = \emptyset = \bigcup \emptyset\); validating \(\langle \{\alpha, \beta \}, \emptyset \rangle\) as a constraint of \(\mathcal{A}\). \par
Moving forward we will consider set theoretical demonstration.
The theory of \(\mathcal{A}\) [Barwise] is \(Th(\mathcal{A})=\{ \langle \alpha,\delta \rangle,\langle \emptyset,\{ \alpha,  \beta \} \rangle,\langle \{ \alpha,  \beta\},\emptyset\rangle\}\). \par
In view of Table 1, the only token of type \(\alpha\) is a, we can observe that a is also of type \(\delta\); here we remove the curly bracket around the singleton to make it readable. This explanation makes  \(\langle \alpha,\delta \rangle\) a sequent constraint supported by \(\mathcal{A}\). 

Given a classification \(\mathcal{A}\) and a set \(\Gamma\)  of types of \(\mathcal{A}\); let \(\bigcap \Gamma=\{x \in tok(\mathcal{A})| \forall \alpha \in \Gamma,x\vDash_\mathcal{A} \alpha\}\) is a subset of \(tok(\mathcal{A})\); with \(\bigcap \emptyset = tok(\mathcal{A})\) and alternatively, \(\bigcup \Gamma =\{x \in tok(\mathcal{A})|\exists \alpha \in \Gamma,x\vDash_\mathcal{A} \alpha\}\). The following proposition gives an algebraic perspective on the validity of a constraint in a classification. 
The computation of a constraint of a classification is not an easy task, specially, it is computational expensive. Fortunately, the logic module of SymPy, a Python library for symbolic mathematics has a function SOPform (Sum of Products form) that will compute and output this theory in disjunction normal form. These two forms; the SOPform and the implicative form have the same semantic / model as we portrait in Table 2. The SymPy computation with input
\begin{equation} \label{eq3}
SOPform([\alpha,\beta,\delta],minterms,dontcares)
\end{equation}
with \(minterms = [[1,0,1],[0,1,1],[0,1,0]]\) and \(dontcares = []\) will return as output
\begin{equation} \label{eq4}
(\beta \& \tilde \alpha) | (\alpha \& \delta \& \tilde \beta) 
\end{equation}
which in mathematical friendly readable form correspond 
\begin{equation} \label{eq5}
(\beta \wedge \neg \alpha) \vee (\alpha \wedge \delta \wedge \neg \beta) 
\end{equation}. \par
For the rest of this document, we will use SymPy for all our computation.For the rest of this document, we will use SymPy for all our computation.
\begin{table}[h!]
    \centering
    \begin{tabular}{|c|c|c|c|c|c|c|c|c|c}
    \hline
     \( \mathcal{A}\)   & \(\alpha\) & \(\beta\)  & \(\delta\) & \(\neg \alpha\) & \(\neg \beta\) & \(\beta \wedge \neg \alpha\) & \(\neg \beta \wedge \alpha \wedge \delta\) & \((\beta \wedge \neg \alpha) \vee (\neg \beta \wedge \alpha \wedge \delta) \)  \\
    \hline
     0 & 1 & 1 & 1 & 0 & 0 & 0 & 0 & 0  \\
    \hline
     0 & 1 & 1 & 0 & 0 & 0 & 0 & 0 & 0 \\
    \hline
     1 & 1 & 0 & 1 & 0 & 1 & 0 & 1 & 1 \\
    \hline
     0 & 1 & 0 & 0 & 0 & 1 & 0 & 0 & 0 \\
    \hline
     1 & 0 & 1 & 1 & 1 & 0 & 1 & 0 & 1 \\
    \hline
     1 & 0 & 1 & 0 & 1 & 0 & 1 & 0 & 1 \\
    \hline
     0 & 0 & 0 & 1 & 1 & 1 & 0 & 0 & 0 \\
    \hline
     0 & 0 & 0 & 0 & 1 & 1 & 0 & 0 & 0 \\
    \hline
    \end{tabular}
    \caption{True table equating theory of a classification table and the output of sum of product from SymPy}
    \label{tab:my_label1}
\end{table}

\subsection{Multi-classification}
Definition: A multiclassification is giving by a tuple \(\mathcal{A}\) = (tok(\(\mathcal{A}\)), typ(\(\mathcal{A}\)), \(\vDash_\mathcal{A}\)) where \(\vDash_\mathcal{A}\) is a function from tok(\(\mathcal{A}\)) \(\times\) typ(\(\mathcal{A}\)) to  \(\mathbb{N}\).\\
Classification is a set of sets: Each token is a set of its types or, equivalently, for each type, the set of tokens that are classified as being of that type. Multi-classification is a set of multisets. We first provide some basic definitions related to multisets.\\
Definition \cite{multiset}:  Let D be a set.A multiset over D is just a pair \(\langle D, f\rangle\), where D is a set and \(f : D \rightarrow \mathbb{N}\) is a function.\\
For \(d \in D\) f(d) is referred to as the multiplicity of d. The set D is the support of the multiset; sometime one ignored the function and referred to a multiset by one capital letter as for example \(D = \{2a, b, 3c\}\), in this case the support is denote by \(\vert D \vert = \{a, b, c\}\).\\
A multi-classification is just a multiset over the cartesian product tok(\(\mathcal{A}\)) \(\times\) typ(\(\mathcal{A}\))
Given a multi-classification \(\mathcal{A} = (tok(\mathcal{A}), typ(\mathcal{A}), \vDash_\mathcal{A})\),  any type \(\alpha\) defined a multiset on the set of tokens as \(\check{\alpha} : typ(\mathcal{A}) \rightarrow \mathbb{N}  with \check{\alpha}(a) = \vDash_\mathcal{A}(a, \alpha)\) The same procedure applied to define multiset on the set of types for any given token. \\
Operations on multisets are the same as on sets:
If  \(\langle D, f\rangle\) and  \(\langle E, g \rangle\) are two multisets, we have the following multiset operation:

Union: \(\langle D, f\rangle \cup \langle E, g \rangle \myeq (h : D \cup E \rightarrow \mathbb{N}\) with \(h(x) = \max(f(x), g(x))\)\\
Intersection: \(\langle D, f\rangle \cap \langle E, g \rangle \myeq (h : D \cap E \rightarrow \mathbb{N} with h(x) = min(f(x), g(x))\)\\
Submultiset: \(\langle D, f\rangle \leq  \langle E, g \rangle \myeq \forall x \in D \cup E \hspace{0.3cm} f(x) \leq\) g(x)\\
Empty multiset: On a domain D the empty multiset is defined as \(\Theta(x) = 0 \hspace{0.3cm} \forall x \in D \)\\
Given a multiclassification \(\mathcal{A} = (tok(\mathcal{A}), typ(\mathcal{A}), \vDash_\mathcal{A}) and a multiset \Gamma of the set typ(\mathcal{A}) of types, this introduce two multisets on the set tok(\mathcal{A})  as follows: \)\\ 
\(\bigvee \Gamma  : tok(\mathcal{A}) \rightarrow \mathbb{N} with \bigvee \Gamma (x) = \max_{\alpha \in \vert \Gamma \vert} \Gamma(\alpha) \times \check{\alpha}(a) for all a in tok(\mathcal{A}).\)\\

\(\bigwedge \Gamma  : tok(\mathcal{A}) \rightarrow \mathbb{N} with \bigwedge \Gamma(x) = \min_{\alpha \in \vert \Gamma \vert} \Gamma(\alpha) \times \check{\alpha}(a) for all a in tok(\mathcal{A}\)). \\

\subsubsection{Example of multi-classification}
With the same set of tokens and types as in Table 1, consider the following table in  Table3;
\begin{table}[ht!]
    \centering
    \begin{tabular}{|c|c|c|c|c|c}
    \hline
      \(\mathcal{A}\)   & \(\alpha\) & \(\beta\) &  \(\delta\) \\
      \hline
     a    &  3 & 2 & 1 \\
     \hline
     b & 0 & 4 & 2 \\
     \hline
     c & 2 & 3 & 0 \\
     \hline
     
    \end{tabular}
    \caption{Example of multi-classification}
    \label{tab:my_label2}
\end{table}
\subsection{Theory of a Multi-classification}
We now carry over the same investigation for the theory of classification to the theory of a multi-classification.\\
A sequent of a multi-classification is a couple \(\langle \Gamma, \Delta \rangle\)  where  \(\Gamma\)  and  \(\Delta\)  are multisets over the set \(typ(\mathcal{A}\) of types. A token \textit{a} satisfies a sequent \(\langle \Gamma, \Delta \rangle\)   if \(\Gamma \leq \hat{a} \Rightarrow \hat{a} \cap \Delta \neq \Theta; \hat{a}\) is a multiset on \(typ(\mathcal{A})\)  defined as  \(\hat{a}(\alpha) = \vDash_\mathcal{A}(a, \alpha)\). 
The same notion of constraint as for classification is apply to multi-classification.\\
A sequent  \(\langle \Gamma, \Delta \rangle \) is a constraint of a multi-classification \(\mathcal{A}\) if it is satisfy by all tokens of \(\mathcal{A}\); i.e., \(\forall a \in tok(\mathcal{A}) \Gamma \leq \hat{a} \Rightarrow \hat{a} \cap \Delta \neq \Theta\)
Developing a theory for this type of classification will be a matter of linear logic. We will not engage in that direction in this paper; instead, we will convert any multi-classification to binary classification and use available tools such as SOPform from sympy\cite{meurer2017sympy} to extract the theory. \\
The conversion from multi classification to binary classification is carry over by duplicating any given type as many times as indicated by its multiplicity.\\
Having this in mind, the multi classification of Table 3 which can equivalently be denoted as a set of multisets by \(\mathcal{A} = \{ a = \{3\alpha, 2\beta, \gamma\}, b = \{ 4\beta, 2\gamma \}, c = \{ 2\alpha, 3\beta \} \} \equiv \{a = \{\alpha, \alpha, \alpha, \beta, \beta, \gamma \}, b = \{ \beta, \beta, \beta, \beta, \gamma, \gamma \}, c = \{ \alpha, \alpha, \beta, \beta, \beta \}\} \). 
With this conversion, the multi-classification table of Table 3 is reduced to the binary classification of Table 4 below.
\begin{table}[ht!]
    \centering
    \begin{tabular}{|c|c|c|c|c|c|c|c|c|c|c}
    \hline
      \(\mathcal{A}\)   & \(\alpha\) & \(\alpha\) & \(\alpha\) &  \(\beta\) & \(\beta\) & \(\beta\) & \(\beta\) &  \(\gamma\) & \(\gamma\) \\
      \hline
     a    &  1 & 1 & 1  & 1 & 1 & 0 & 0 & 1 & 0 \\
     \hline
     b & 0 & 0 & 0 & 1 & 1 & 1 & 1 & 1 & 1 \\
     \hline
     c & 1 & 1 & 0 & 1 & 1 & 1 & 0 & 0 & 0 \\
     \hline
     
    \end{tabular}
    \caption{Binary Classification from multi-classification}
    \label{tab:my_label3}
\end{table}

Using Sympy as we mentioned before, we obtained the theory of the multi-classification with the command 
\begin{center}
SOPform([\(\alpha, \alpha, \alpha, \beta, \beta, \beta, \beta, \gamma, \gamma\)], [[1,1,1,1,1,0,0,1,0],[0,0,0,1,1,1,1,1,1], [1, 1,0,1,1,1,0,0,0]])
\end{center}
which will produce:\\
\begin{center}
\((\tilde \alpha \wedge \beta \wedge \gamma)\)
\end{center}
\section{From Colored Petri Net to Multi-Classification Tables}
A classification as just defined  has no temporal aspect and is essentially a snapshot at a point in time of a system.  To accommodate change in the is-of-type relation,  a classification table is viewing as marking of an elementary colored Petri net.\\
Definition (Petri Net \cite{badouel}). A Petri net is a tuple \((S, T, l, M_0)\)  where:
\begin{itemize}
\item S is a finite set of places;,
\item T is a finite set of transitions, disjoint from S; 
\item \textit{l} is a labelling function such that 
    \begin{itemize}
    \item  for all \(s \in S\), l(s) is the type of s, i.e., a restriction on the tokens it may hold.
    \item for all \(t \in T\), l(t) is the guard of t,
    \item for all \((x, y) \in (S \times T ) \cup (T \times S)\), l(x,y) is the annotation of the arc from x to y and is a multiset of expressions to specify the tokens produced or consumed through the arc;
    \end{itemize}
\item \(M_0\) is the initial marking; a classification table indicating for place which data type is present at that place.
\end{itemize}
In this document we will restrict ourself to a special class of nets where each place can hold only one token of a given type.
\subsection{Example of Colored Petri Net and its Making Graph}
Figure 2 below portrait an example of colored Petri net with three places \(p_1\), \(p_2\) and \(p_3\) and four transitions \textit{a}, \textit{b}, \textit{c} and \textit{d}. 
\begin{figure}[H]
  \includegraphics{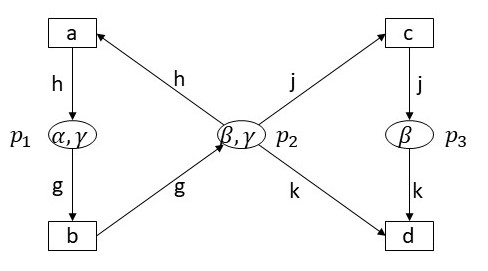}
  \caption{Example of colored Petri net}
\end{figure}
We will represent marking as classification table, and the initial marking of this CPN s it can be viewing in Figure 2 is  
\begin{table}[H]
    \centering
    \begin{tabular}{|c|c|c|c|c|c}
    \hline
     \( M_0 \)   & \(\alpha\) & \(\beta\) &  \(\delta\) \\
      \hline
    \(p_1\)    &  1 & 0 & 1 \\
     \hline
    \( p_2\) & 0 & 1 & 1 \\
     \hline
     \(p_3\) & 0 & 1 & 0 \\
     \hline
     
    \end{tabular}
    \caption{Initial Marking of the Colored Petri Net}
    \label{tab:my_label4}
\end{table}
This colored Petri net is simple enough to make the binding of any transition stand up by itself: At its initial configuration, transition \textit{a} is enabled with binding \(h \equiv \beta \) or \(h \equiv \gamma \); transition \textit{b} is enabled with binding  \(g \equiv \alpha \) or \(h \equiv \gamma \); transition c is enabled with binding  \(j \equiv \beta \) or \(j \equiv \gamma \); and, transition d is enabled with binding  \(k \equiv \beta \).\\
For some particular application, there is sometime a constraint not having more than one token of a given type in a place. If one is reinforcing this type of constraint, not all binding mentioned above are enabled. In this context transition \textit{a} is enabled with only binding \(h \equiv \beta \); the binding \(h \equiv \gamma \) cannot longer be fufill because place \(p_1\) has already a token of type \(\gamma\). Similarly, transition \textit{b} is enabled with only the binding \(g \equiv \alpha \), transition \textit{c} is enabled with the binding \(j \equiv \gamma \) and transition is not enabled at all. In general, at the initial marking \(M_0\), all the transition are enabled, if by a mechanism that we will not go through in this document the colored Petri net moves to fire transition \textit{a} with binding \(h \equiv \beta \), the next configuration will be the classification \(M_1\) with
\begin{table}[H]
    \centering
    \begin{tabular}{|c|c|c|c|c|c}
    \hline
     \( M_1 \)   & \(\alpha\) & \(\beta\) &  \(\delta\) \\
      \hline
    \(p_1\)    &  1 & 1 & 1 \\
     \hline
    \( p_2\) & 0 & 0 & 1 \\
     \hline
     \(p_3\) & 0 & 1 & 0 \\
     \hline
     
    \end{tabular}
    \caption{Next Marking after the initial marking}
    \label{tab:my_label5}
\end{table}
In the context of at most one token of any given type in a place, a quick combinatorics computation shows that there will be 27 makings (classifications) reachable. Although, in general, the behavior of a CPN is non-deterministic \cite{Obaidat} (from one marking, there are multiple reachable markings), we will use SNAKES \cite{Pommereau} a general-purpose Petri nets library to generate all marking of our colored Petri net. 
\begin{figure}[H]
  \includegraphics{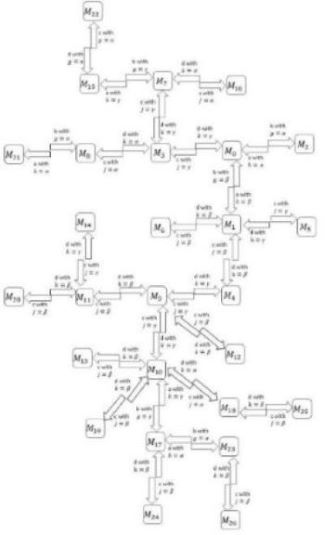}
  \caption{A state graph of a colored Petri net}
\end{figure} 
A state graph of a Petri net is the graph of markings where nodes or vertices are vectors, states, or configurations of the Petri net as enabled transitions are fired (in our case of CPN, configurations are multi-classifications). Labeled edges are transition that is being fired. For our CPN of Figure 2, we have a total of 108 markings. 
The reduced state graph  where places can contains at most one token of a given type, is represent in Figure 3 above.\\
\subsection{Theory of a Colored Petri Net}
The state graph of a CPN is given here with nodes as multi-classification, and the edge between two nodes is transition fired with appropriate binding. We combine the theories extracted from each multi-classification to form a knowledge base of the CPN.
For example amalgamating the theories of multi-classification obtained from CPN of Figure 2
gives us the following knowledge base:\\

\(CPN-KB = \{(\neg a \wedge \neg b \wedge c) \vee (\neg a \wedge \neg b \wedge \neg c),\\
(a \wedge \neg b \wedge c) \vee (\neg a \wedge b \wedge \neg c),\\
( \neg b \wedge c) \vee (\neg a \wedge b \wedge \neg c),\\
(\neg a \wedge c) \vee (a \wedge \neg b  \neg c),\\
(a \wedge \neg b \wedge \neg c) \vee (\neg a \wedge b \wedge \neg c),\\
(\neg a \wedge b ) \vee ( b \wedge \neg c),\\
(\neg a \wedge  b \wedge c) \vee ( a \wedge \neg b \wedge \neg c),\\
(\neg a \wedge  b \wedge c) \vee (\neg a \wedge \neg b \wedge \neg c),\\
(b \wedge \neg c) \vee (\neg a \wedge \neg b \wedge c),\\
(\neg a \wedge  b \wedge c) \vee (a \wedge \neg b \wedge \neg c)\vee (\neg a \wedge \neg b \wedge  c),\\
(a \wedge  b \wedge c) \vee (\neg a \wedge \neg b \wedge \neg c),\\
(a \wedge  b \wedge c) \vee (\neg a \wedge  \neg c),\\
(a \wedge  b \wedge \neg c) \vee ( \neg a \wedge b \wedge  c)\vee (\neg a \wedge \neg b \wedge \neg c),\\
(\neg a \wedge  c) \vee (\neg b \wedge c),\\
(a \wedge  b \wedge \neg c) \vee ( \neg a \wedge  \neg b \wedge  c)\vee (\neg a \wedge \neg b \wedge \neg c),\\
(a \wedge \neg b \wedge \neg c) \vee ( \neg a \wedge b \wedge \neg c)\vee (\neg a \wedge \neg b \wedge c),\\
(\neg b \wedge \neg c) \vee (\neg a \wedge b \wedge c),\\
(a \wedge \neg b \wedge c) \vee ( \neg a \wedge b \wedge  c)\vee (\neg a \wedge \neg b \wedge \neg c),\\
(\neg a \wedge b) \vee (a \wedge \neg  b \wedge c),\\
(a \wedge  b \wedge c) \vee ( \neg a \wedge b \wedge  \neg c) \vee (\neg a \wedge \neg b \wedge \neg c),\\
( b \wedge c) \vee (\neg a \wedge \neg  b \wedge \neg c),\\
(a \wedge b \wedge \neg c) \vee (\neg a \wedge \neg  b \wedge  c),\\
(a \wedge  b \wedge  \neg c) \vee ( \neg a \wedge b \wedge  c) \vee (\neg a \wedge \neg b \wedge \neg c),\\
( \neg a \wedge b) \vee (\neg a \wedge  c),\\
(\neg a \wedge  b \wedge c) \vee ( a \wedge \neg b \wedge \neg c) \vee (\neg a \wedge b \wedge \neg c),\\
( \neg a \wedge b \wedge c) \vee (\neg a \wedge \neg b)\}\)

\section{Conclusion and Futures Research}
In this research, we expand the notion of classification table as defined by Barwise and Seligman by defining a type as a multiset of tokens in the general sense. We then show how to extract its theory with the same tool for classification. The main result of this paper is the equivalence we established between a multi-classification and the marking of a CPN. With this observation, we defined as regularities of a CPN the amalgamation of the theory of its markings.\\
We extracted the theory of a multi-classification by reducing it to a classification table and using Sympy. As future work, we intend to study this theory on its own in the light of linear logic -- linear sequent calculus.
\printbibliography

\end{document}